\newcommand{\network}{\mathcal{N}}
\newcommand{\C}{\mathcal{C}}
\newcommand{\V}{\mathcal{V}}
\newcommand{\lines}{\mathcal{L}}
\newcommand{\indeg}{\mbox{\rm{indeg}}}
\newcommand{\outdeg}{\mbox{\rm{outdeg}}}
\newcommand{\core}{\mbox{\rm{Core}}}
\newcommand{\cores}{\mbox{\rm{Cores}}}
\renewenvironment{proof}{\noindent \textbf{Proof:}\hspace{1mm}}{{\hspace{\stretch{1}}
        \rule{1ex}{1ex}} \vspace{10pt}}
\theoremstyle{definition}
\newtheorem{theorem}{Theorem}[section]
\newtheorem{defin}{Definition}[section]
\newtheorem{corollary}{Corollary}[section]
\newtheorem{lemma}{Lemma}[section]
\title{Generalized Two-mode Cores}
\author[1]{Monika Cerin\v sek}
\author[2,3]{Vladimir Batagelj}
\affil[1]{Hru\v ska d.o.o., Kajuhova 90, 1000 Ljubljana}
\affil[2]{Institute of Mathematics, Physics and Mechanics, Jadranska 19, 1000 Ljubljana}
\affil[3]{University of Ljubljana, FMF, Department of Mathematics, Jadranska 19, 1000 Ljubljana}
\date{}
\begin{document}
\maketitle

\begin{abstract}
The node set of a two-mode network consists of two disjoint subsets and all its links are linking these two subsets. The links can be weighted. We develo\-ped a new method for identifying important subnetworks in two-mode networks. The method combines and extends the ideas from generalized cores in one-mode networks and from $(p,q)$-cores for two-mode networks.
In this paper we introduce the notion of generalized two-mode cores and discuss some of their properties. An efficient algorithm to determine generalized two-mode cores and an analysis of its complexity are also presented. 
 For illustration some results obtained in analyses of real-life data are presented.
\end{abstract}

\textbf{Keywords:}
Network Analysis, Two-mode Network, Generalized Two-mode Core, Algorithm


\textbf{MSC[2010]:} 05C69, 91D30, 68R10, 91C20


\section{Introduction}
\label{intro}
Network analysis is an approach to the analysis of relational data. In this paper we deal with the analysis of two-mode networks~\cite{borgatti,latapy}. A two-mode network is a network in which the set of nodes consists of two disjoint subsets and its links are linking these two subsets.

The traditional approach to the analysis of two-mode networks is usually indirect: first
a two-mode network is converted into one of the two corresponding one-mode projections, and afterward it is analyzed using standard network analysis methods \cite{wasserman}. \emph{Direct} methods for the analysis of two-mode networks are quite rare \cite{sn, largescale, sage}. We can use bipartite statistics on degrees, generalized blockmodeling, $(p,q)$-cores, two-mode hubs and authorities, $4$-ring weights, bi-communities, two-mode clustering, bipartite cores, and some others. Many methods for identifying important subnetworks are available for one-mode networks (measures of centrality and importance, generalized cores, line islands, node islands, clustering, blockmodeling, etc.).
We present a new direct method which can be used for the identification of important subnetworks in two-mode networks with respect to selected node properties.

We combine the ideas from generalized cores in one-mode networks and from $(p,q)$-cores for two-mode networks into the notion of \emph{generalized two-mode cores}. 
We developed and implemented an algorithm for identifying generalized two-mode cores for selected node properties and given thresholds for both subsets of nodes.
We also propose an algorithm to find the nested generalized two-mode cores for one fixed threshold value.

In the next section we survey the works that contain the ideas we used for the development of our method. In Section~\ref{methods} we present an algorithm for identifying the generalized two-mode cores. We list some node properties that are used as measures of importance. We also present some properties of generalized two-mode cores. We prove that for equivalent properties measured in ordinal scales the sets of generalized two-mode cores are the same. The algorithm, the proof  of its correctness, and a simple analysis of its complexity are presented in Section~\ref{algo}. In Section~\ref{rez} some results obtained in analyses of real-life data are presented.

\section{Related work}
\label{related}
The notion of $k$-core was introduced by Seidman (1983)~\cite{seidman}.
Let $\mathcal{G} = (\V, \lines)$ be a graph with $n = |\V|$ nodes and $m = |\lines|$ links. 
Let $k$ be a fixed integer and let $\deg(v)$ be the degree of a node $v \in \V$. 
A subgraph $\mathcal{H}_k = (\C_k, \lines|\C_k)$ induced by the subset $\C_k \subseteq \V$ is called a \textit{k-core} iff $\deg_{\mathcal{H}_k} (v) \geq k$, for all $v \in \C_k$, and $\mathcal{H}_k$ is the maximal such subgraph.
If we replace the degree with some other node property, we get the notion of \textit{generalized cores} as it was introduced in~\cite{zaversnik}. The node property can be a node degree, maximum of incident link weights, sum of incident link weights, etc. They are described in more details in Section~\ref{methods}.

The other possible generalization of $k$-cores is their extension to two-mode networks. 
The notion of  $(p,q)$-cores was introduced in~\cite{imdb}. 
A subset $\C \subseteq \V$ determines a $(p,q)$-core in a two-mode network $\network = ((\V_1, \V_2), \lines), \V = \V_1 \cup \V_2$ iff in the subnetwork $\mathcal{K} = ((\C_1, \C_2), \lines|\C), \C_1 = \C \cap \V_1, \C_2 = \C \cap \V_2$ induced by $\C$ it holds that for all $v \in \C_1: \deg_{\mathcal{K}}(v) \geq p$ and for all $v \in \C_2: \deg_\mathcal{K}(v) \geq q$, and $\C$ is the maximal such subset in $\V$.

We combined the ideas from generalized cores and $(p,q)$-cores into the notion of generalized two-mode cores.
Generalized two-mode cores are defined similarly to $(p,q)$-cores with one exception -- instead of using the degree of nodes, we now allow also some other properties of nodes. 
The properties of nodes on subsets $\V_1$ and $\V_2$ can be different.
The detailed definition is given in Section \ref{prop}.

Other types of two-mode subnetworks were discussed in the literature. A bipartite core is defined as a complete two-mode subnetwork \cite{bicores}. Bipartite cores are determined by the size of each subset of vertices. 

Similar methods are varieties of a community detection in two-mode networks: with maximization of monotone function \cite{col0}, with a dual projection \cite{com10}, with properties of the eigenspectrum of the network's matrix \cite{com1}, with label propagation and recursive division of the two types of nodes \cite{com9}, with the stochastic block modeling \cite{com5}, and many others. Another very similar method is a bipartite clustering \cite{clu2, clu1}. The substantial difference is that these methods are determining a clustering of the whole set of nodes and our method determines only an important subset.

The generalized two-mode cores depend on selected node properties that are expressing different aspects of the network structure (for example, the intensity of links). They are also using different criteria. Therefore our method represents a new approach to two-mode network analysis. 
It does not represent an improvement of any existing method, but a generalization of $(p,q)$-cores.

\section{Algorithms for generalized two-mode cores}
\label{methods}
As mentioned in Section~\ref{intro} the algorithms for identifying $k$-cores, generalized cores, and $(p,q)$-cores have already been developed \cite{imdb,zaversnik,seidman}. We propose a new algorithm, which combines and extends the ideas from generalized cores in one-mode networks and from $(p,q)$-cores in two-mode networks. Besides implementing the new algorithm, we also prove its correctness and analyze its complexity.
For testing the usefulness of the method we applied it on real networks.

\subsection{Properties of nodes}
\label{prop}
For a network $\network = (\V, \lines, w)$ and a weight function $w: \lines \rightarrow \mathbb{R}^+$ a \emph{property function} $f(v, \C)  \in \mathbb{R}_0^+$ is defined for all $v \in \V$ and $\C \subseteq \V$.
A subset $\C$ induces the subnetwork to which the evaluation of the property function is limited.
In an undirected network it holds: $w(u,v) = w(v,u)$ for all pairs of nodes $u,v \in \V.$

Let us denote the neighborhood of a node $v$ as $N(v)$ and the neighborhood of a node $v$ within the subset $\C$ as $N(v, \C) = N(v)\cap \C$. The neighborhood of a node $v$ within the subset $\C$ including $v$ we denote as $N^+(v, \C) = N(v, \C) \cup \{v\}.$ Let us also denote a measurement on nodes (degree, centrality, etc.) as $t: \V \rightarrow \mathbb{R}^+_0.$

We say that the property function
$f(v, \C)$ is \emph{local} iff
$$
f(v, \C) = f(v, N(v, \C))\qquad\textrm{for all }v \in \V \textrm{ and } \C \subseteq \V.
$$
The property function $f(v, \C)$ is \emph{monotonic} iff
$$
\C_1 \subset \C_2 \implies \forall v \in \V: f(v, \C_1) \leq f(v, \C_2).
$$

Some node properties ($f_1$ -- $f_{10}$) were proposed in~\cite{zaversnik}.
In the Tab.~\ref{tab_props} are listed examples of property functions.
\begin{table}
\caption{Examples of property functions.}\label{tab_props}
\begin{tabular}{p{8.3cm}p{7.5cm}}
\textbf{Formula} & \textbf{Description}\\
\hline
$f_1 (v, \C) = \deg_\C(v)$ 			& Degree of a node $v$ within $\C$.\\
$f_2 (v, \C) = \indeg_\C(v)$ 		& Input degree of a node $v$ within $\C$.\\
$f_3 (v, \C) = \outdeg_\C(v)$ 		& Output degree of a node $v$ within $\C$.\\
$f_4 (v, \C) = \indeg_\C(v) + \outdeg_\C(v)$& For a directed network $f_4 = f_1$.\\
$f_5 (v, \C) = \mbox{\rm{wdeg}}_\C(v) = \sum_{u \in N(v, \C)} w(v,u)$ & Sum of weights of links within $\C$ that have a node $v$ for an end node.\\
$f_6 (v, \C) = \mbox{\rm{mweight}}_\C(v) = \max_{u \in N(v, \C)} w(v, u)$ & Maximum of weights of all links within $\C$ that have a node $v$ for an end node.\\
$f_7 (v, \C) = \mbox{\rm{pdeg}}_\C(v) =  \frac{\deg_\C(v)}{\deg(v)}$ if $\deg(v) > 0$ else $f_7 (v, \C) = 0$ & Proportion of $N(v, \C)$ in $N(v)$.\\
$f_8 (v, \C) = \mbox{\rm{density}}_\C(v) = \frac{\deg_\C(v)}{\max_{u \in N(v)} \deg(u)}$ if $\deg(v) > 0$ else $f_8(v, \C) = 0$ & Relative density of the neighborhood of a node $v$ within $\C$.\\
\parbox{7cm}{$f_{9} (v, \C) = \mbox{\rm{degrange}}_\C(v) = $\\ $\max_{u \in N(v, \C)} \deg (u) - \min_{u \in N(v, \C)} \deg (u)$}
 & Range of degrees of neighbors of a node $v$ within $\C$ with respect to their degrees.\\
\parbox{7cm}{$f_{10} (v, \C) = \mbox{\rm{tdegrange}}_\C(v) = $\\$\max_{u \in N^+(v, \C)} \deg (u) - \min_{u \in N^+(v, \C)} \deg (u)$} & Total range of degrees of neighbors of a node $v$.\\
$f_{11} (v, \C) = \mbox{\rm{pweight}}_\C(v) = \frac{\sum_{u \in N(v, \C)} w(v,u)}{\sum_{u \in N(v)} w(v,u)}$ if $\sum_{u \in N(v)} w(v,u) > 0$ else $f_{11} (v, \C) = 0$ & Proportion of weights of links with a node $v$ as an end node that have the other end node within $\C$.\\
$f_{12} (v, \C) = \mbox{\rm{triangles}}_\C(v)$ & Number of triangles through a node $v$ with all three nodes in $\C$.\\
$f_{13} (v, \C) = \mbox{sum } \C(v,t) = \sum_{u \in N(v, \C)}t(u).$&\\
$f_{14} (v, \C) = \max \C(v,t) = \max_{u \in N(v, \C)}t(u).$&
\end{tabular}
\end{table}

All the listed functions have the property $f(v, \emptyset) = 0$ for all $v \in \V$.

It can easily be verified that all the listed property functions are local and monotonic. An example of a non-monotonic function would be the average weight
$$
f(v, \C) = \frac{1}{\deg_{\C}(v)}\displaystyle\sum_{u \in N(v, \C)} w(v, u)
$$
for $N(v, \C) \neq \emptyset$, otherwise $f(v, \C) = 0$. An example of a non-local function is the number of cycles or closed walks of length $k$, $k \geq 4$, through a node.

\subsection{Generalized two-mode cores}
\label{gtc}
\begin{defin}\label{defgtcore}
Let $\network = ((\V_1, \V_2), \lines, (f, g), w), \V = \V_1 \cup \V_2 $ be a finite two-mode network -- the sets $\V$ and $\lines$ are finite. Let $\mathcal{P}(\V)$ be a power set of the set $\V$.  Let functions $f$ and $g$ be defined on the network $\network$: $f, g: \V \times \mathcal{P}(\V) \longrightarrow \mathbb{R}_0^+$.

A subset of nodes $\C \subseteq \V$ in a two-mode network $\network$ is a \textit{generalized two-mode core} $\C = \core(p,q; f,g),\; p,q \in \mathbb{R}_0^+$ if and only if in the subnetwork $\mathcal{K} = ((\C_1, \C_2), \lines|\C), \C_1 = \C \cap \V_1, \C_2 = \C \cap \V_2$ induced by $\C$ it holds that for all $v \in \C_1: f(v,\C) \geq p$ and for all $v \in \C_2: g(v, \C) \geq q$, and $\C$ is the maximal such subset in $\V$.
\end{defin}

When the functions $f$ and $g$ are clear from context and the parameters $p$ and $q$ are fixed, we use the abbreviation $\core(p,q) \equiv \core(p,q; f,g)$.

A set of all generalized two-mode cores for a network $\network = ((\V_1, \V_2), \lines, (f,g), w)$ is defined as
$$
\cores(\network) = \{ \core(p,q; f,g); p \in \mathbb{R}^+_0 \land q \in \mathbb{R}^+_0 \}.
$$
In this set we can define a relation
$$
\core(p,q;f,g) \sqsubseteq \core(p',q';f,g)
$$
as
$$
\C_1(p,q;f,g) \subseteq \C_1(p',q';f,g) \, \land \, \C_2(p,q;f,g) \subseteq \C_2(p',q'; f,g).
$$

Because  $\cores(\network) \subseteq \mathcal{P}(\V)$ and $(\mathcal{P}(\V), \subseteq)$ is a partially ordered set, also $(\cores(\network), \subseteq)$ and $(\cores(\network), \sqsubseteq)$ are partially ordered.

For the generalized two-mode cores we have:
\begin{itemize}
\item $\core(0,0) = \V$,
\item the subnetwork $\mathcal{K}= ((\C_1, \C_2), \lines|\C),\; \C = \core (p,q;f,g)$ is not necessarily connected.
\end{itemize}

\begin{lemma}\label{lem1}
Let $\network = ((\V_1, \V_2), \lines, (f,g), w)$ and its "mirror" $\tilde{\network} = ((\V_2, \V_1), \lines, (g,f), w)$ be two-mode networks.
It holds 
$$
\core_{\tilde{\network}} (p,q; f,g) = \core_{\network}(q,p;g,f).
$$
\end{lemma}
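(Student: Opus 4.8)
The plan is to unwind Definition~\ref{defgtcore} on both sides of the claimed identity and observe that the two conditions defining the respective cores are literally the same system of constraints, merely with the two partition classes (and the two property functions) listed in the opposite order; so the ``mirror'' operation is just a relabeling symmetry built into the definition.

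First I would spell out what $\core_{\tilde{\network}}(p,q;f,g)$ means. In $\tilde{\network} = ((\V_2,\V_1), \lines, (g,f), w)$ the first partition class is $\V_2$ and the second is $\V_1$, while in the expression $\core(p,q;f,g)$ the function $f$ (with threshold $p$) governs the first class and $g$ (with threshold $q$) governs the second. Thus, by Definition~\ref{defgtcore}, a set $\C \subseteq \V$ equals $\core_{\tilde{\network}}(p,q;f,g)$ iff $\C$ is the maximal subset of $\V$ such that, in the induced subnetwork $((\C\cap\V_2, \C\cap\V_1), \lines|\C)$, one has $f(v,\C) \geq p$ for every $v \in \C\cap\V_2$ and $g(v,\C) \geq q$ for every $v \in \C\cap\V_1$. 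Applying Definition~\ref{defgtcore} in the same way to $\core_{\network}(q,p;g,f)$ --- where now $g$ with threshold $q$ governs the first class $\V_1$ and $f$ with threshold $p$ governs the second class $\V_2$ --- gives: $\C = \core_{\network}(q,p;g,f)$ iff $\C$ is the maximal subset of $\V$ such that, in $((\C\cap\V_1, \C\cap\V_2), \lines|\C)$, one has $g(v,\C) \geq q$ for every $v \in \C\cap\V_1$ and $f(v,\C) \geq p$ for every $v \in \C\cap\V_2$.

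It then remains only to check that these two descriptions of $\C$ coincide. This rests on two trivial observations: (i) the restricted link set $\lines|\C$, and hence the induced subnetwork, does not depend on which of $\V_1,\V_2$ is declared ``first'', since restricting $\lines$ to $\C$ is insensitive to the ordering of the partition; and (ii) the two systems of inequalities above are the same set of requirements, because conjunction is commutative --- ``$f(\cdot,\C)\geq p$ on $\C\cap\V_2$ and $g(\cdot,\C)\geq q$ on $\C\cap\V_1$'' is exactly the requirement ``$g(\cdot,\C)\geq q$ on $\C\cap\V_1$ and $f(\cdot,\C)\geq p$ on $\C\cap\V_2$''. (Also, $\V = \V_1\cup\V_2$, so $\mathcal{P}(\V)$ and the common domain of $f$ and $g$ are literally the same for $\network$ and $\tilde{\network}$.) Consequently a subset of $\V$ satisfies the defining property of $\core_{\tilde{\network}}(p,q;f,g)$ precisely when it satisfies the defining property of $\core_{\network}(q,p;g,f)$, so the maximal such subsets --- i.e. the two cores --- are equal. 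There is no genuine obstacle here beyond keeping the index bookkeeping straight: the content of the lemma is exactly that Definition~\ref{defgtcore} is invariant under simultaneously swapping $(\V_1,\V_2)$, $(f,g)$, and $(p,q)$.
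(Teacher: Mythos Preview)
Your proof is correct and is exactly the kind of direct verification the paper has in mind: the paper does not give a detailed argument, stating only that the proofs of Lemmas~\ref{lem1}, \ref{lem2} and Corollary~\ref{cor1} are simple. Unwinding Definition~\ref{defgtcore} on both sides and observing that the resulting constraint systems coincide is precisely that simple argument.
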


\begin{lemma}\label{lem2}
Let $F \subseteq\mathbb{R}$ be a set of values of the property $f$ (codomain of the property function $f$) and $\varphi : F \longrightarrow \mathbb{R}_0^+$ a strictly increasing function. Then
$$
\core(p,q; f,g) = \core(\varphi(p),q; \varphi \circ f, g).
$$
\end{lemma}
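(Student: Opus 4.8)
The plan is to show that the two generalized two-mode cores appearing on the two sides of the claimed identity are extracted as the maximal element of one and the same family of subsets of $\V$, so that they must coincide.

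First I would fix an arbitrary $\C \subseteq \V$, write $\C_1 = \C \cap \V_1$ and $\C_2 = \C \cap \V_2$, and compare the two defining systems of inequalities term by term. The condition imposed on the nodes of $\C_2$ -- namely $g(v,\C) \ge q$ for all $v \in \C_2$ -- is literally the same in $\core(p,q;f,g)$ and in $\core(\varphi(p),q;\varphi\circ f,g)$, so nothing has to be done there. For the nodes of $\C_1$ the point is that $f(v,\C)$ always lies in $F$ (the codomain of $f$), so $\varphi$ is applicable to it, and since $\varphi$ is strictly increasing we have the equivalence
\[
f(v,\C) \ge p \iff \varphi(f(v,\C)) \ge \varphi(p) \iff (\varphi\circ f)(v,\C) \ge \varphi(p),
\]
valid for every $v \in \C_1$; this also tacitly uses that $p$ belongs to the domain $F$ of $\varphi$, which is needed merely so that $\varphi(p)$ is defined.

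Consequently, a subset $\C$ satisfies the core inequalities for the parameter pair $(p,q)$ and the property pair $(f,g)$ exactly when it satisfies them for the pair $(\varphi(p),q)$ and the property pair $(\varphi\circ f,g)$: the two families of ``feasible'' subsets of $\V$ coincide as sets. Since by Definition~\ref{defgtcore} a generalized two-mode core is by definition the maximal feasible subset, and the two feasibility families are equal, their maximal elements are equal, which is precisely the asserted identity $\core(p,q;f,g) = \core(\varphi(p),q;\varphi\circ f,g)$.

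I expect the only point genuinely requiring care to be the use of strict monotonicity: if $\varphi$ were merely non-decreasing, the implication $\varphi(f(v,\C)) \ge \varphi(p) \Rightarrow f(v,\C) \ge p$ could fail (a flat piece of $\varphi$ would let nodes with $f(v,\C) < p$ slip into the core), so strictness is exactly what makes the equivalence -- and hence the lemma -- go through. A secondary, purely bookkeeping remark is that one should make sure the phrase ``the maximal such subset'' in Definition~\ref{defgtcore} is unambiguous; but since both sides inherit literally the same feasibility family, whatever convention pins down ``the'' core on one side pins down the identical object on the other.
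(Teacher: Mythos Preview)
Your argument is correct and is precisely the natural direct verification the paper has in mind: the paper does not spell out a proof at all, remarking only that the proofs of Lemmas~\ref{lem1}--\ref{lem2} and Corollary~\ref{cor1} ``are simple.'' Your use of the order-preserving equivalence $f(v,\C)\ge p \iff \varphi(f(v,\C))\ge \varphi(p)$, together with the observation that the $g$-constraints are unchanged, is exactly the intended one-line justification, and your side remarks on why strictness is essential and on $p\in F$ are apt.
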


\begin{corollary}\label{cor1}
Let $F \subseteq\mathbb{R}$ and $\varphi : F \longrightarrow \mathbb{R}_0^+$ be as in Lemma \ref{lem2}. Then
$$
\core(q,p;g,f) = \core(q,\varphi(p); g, \varphi \circ f).
$$
\end{corollary}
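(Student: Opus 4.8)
The plan is to deduce the identity directly from Lemma~\ref{lem1} and Lemma~\ref{lem2}, with essentially no new work beyond bookkeeping: pass to the mirror network, where Lemma~\ref{lem2} now acts on the coordinate we want, apply it, and come back. All cores in the statement are understood to be taken in the same network $\network$; the only care needed is to track which network each core lives in once the subscript is suppressed.

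First I would rewrite the left-hand side with Lemma~\ref{lem1}: since $\tilde{\network} = ((\V_2,\V_1),\lines,(g,f),w)$ is the mirror of $\network$, we have $\core_{\network}(q,p;g,f) = \core_{\tilde{\network}}(p,q;f,g)$. In $\tilde{\network}$ the first node property is $f$ and the first threshold is $p$, so Lemma~\ref{lem2} applies verbatim and gives $\core_{\tilde{\network}}(p,q;f,g) = \core_{\tilde{\network}}(\varphi(p),q;\varphi\circ f,g)$. Finally I would apply Lemma~\ref{lem1} once more, now with the roles of $\network$ and $\tilde{\network}$ interchanged (using $\tilde{\tilde{\network}}=\network$), to obtain $\core_{\tilde{\network}}(\varphi(p),q;\varphi\circ f,g) = \core_{\network}(q,\varphi(p);g,\varphi\circ f)$. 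Chaining the three equalities yields the claim.

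An equivalent route is to first record the ``mirror of Lemma~\ref{lem2}'', namely $\core(p,q;f,g) = \core(p,\varphi(q);f,\varphi\circ g)$, whose proof is the argument of Lemma~\ref{lem2} carried out on the second mode $\V_2$ instead of the first; the corollary is then the special case obtained by the renaming $(f,g,p,q)\mapsto(g,f,q,p)$. Either way, the hypothesis that $\varphi$ is strictly increasing enters exactly once, through Lemma~\ref{lem2}, and no properties of $f$, $g$, or $w$ are needed beyond what that lemma already assumes.

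There is no real obstacle here: the statement is a formal consequence of the two preceding results. The one place an error could slip in is the suppressed network subscript — one must apply Lemma~\ref{lem1} to $\network$ on the way in and to $\tilde{\network}$ on the way out — so in the write-up I would keep the subscripts explicit until the last line and only then drop them, noting $\tilde{\tilde{\network}}=\network$.
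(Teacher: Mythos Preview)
Your argument is correct and matches the paper's intent: the paper merely notes that the proofs of Lemmas~\ref{lem1} and~\ref{lem2} and Corollary~\ref{cor1} are simple, and your derivation---pass to the mirror via Lemma~\ref{lem1}, apply Lemma~\ref{lem2} there, and pass back---is exactly the kind of bookkeeping this remark points to. Your care with the network subscripts and the observation $\tilde{\tilde{\network}}=\network$ are appropriate; nothing further is needed.
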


The proofs for Lemmas \ref{lem1} and \ref{lem2} and Corollary \ref{cor1} are simple. Lemma \ref{lem2} and Corollary \ref{cor1} tell us that equivalent properties measured in ordinal scales produce the same generalized two-mode cores.

\subsection{Boundary for threshold values}
\label{boundsec}
\begin{theorem}
In a two-mode network $\network = ((\V_1, \V_2), \lines, (f,g),w)$ for $f$ and $g$ monotonic functions it holds:
$$
(p_1 \leq p_2 \, \land \, q_1 \leq q_2) \implies \core(p_2, q_2) \subseteq \core(p_1, q_1).
$$
\end{theorem}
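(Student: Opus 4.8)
The plan is to deduce the inclusion from the maximality clause in Definition~\ref{defgtcore}. Write $\C := \core(p_2,q_2)$ and $\C_i := \C\cap\V_i$. The easy half is just chaining inequalities: since $\C$ is a $(p_2,q_2)$-core, every $v\in\C_1$ satisfies $f(v,\C)\ge p_2\ge p_1$ and every $v\in\C_2$ satisfies $g(v,\C)\ge q_2\ge q_1$. Hence $\C$ already lies in the family
$$
\mathcal{F} := \{\,D\subseteq\V : f(v,D)\ge p_1 \text{ for all } v\in D\cap\V_1 \text{ and } g(v,D)\ge q_1 \text{ for all } v\in D\cap\V_2\,\},
$$
i.e. $\C$ ``passes the $(p_1,q_1)$-test''. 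The whole proof then reduces to showing that every member of $\mathcal{F}$ is contained in $\core(p_1,q_1)$.

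For that I would first make precise why $\core(p_1,q_1)$ is entitled to be called \emph{the} maximal element of $\mathcal{F}$, i.e. why $\mathcal{F}$ has a largest element rather than merely maximal ones. The key step — and the only place where monotonicity of $f$ and $g$ is actually used, as opposed to the trivial comparisons $p_2\ge p_1$, $q_2\ge q_1$ above — is that $\mathcal{F}$ is closed under unions. Indeed, if $A,B\in\mathcal{F}$ and $v\in A\cap\V_1\subseteq (A\cup B)\cap\V_1$, then $A\subseteq A\cup B$ and monotonicity give $f(v,A\cup B)\ge f(v,A)\ge p_1$; the same argument handles $v\in B\cap\V_1$, and symmetrically the $\V_2$/$g$ side; so $A\cup B\in\mathcal{F}$. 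Since $\V$ is finite, $\mathcal{F}$ is a finite (and nonempty, as $\emptyset\in\mathcal{F}$ vacuously) collection, so $\bigcup\mathcal{F}\in\mathcal{F}$ by iterating the pairwise closure, and $\bigcup\mathcal{F}$ is then the unique maximum of $\mathcal{F}$, which by Definition~\ref{defgtcore} is precisely $\core(p_1,q_1)$.

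Putting the two pieces together: $\C\in\mathcal{F}$ by the first paragraph, hence $\C\subseteq\bigcup\mathcal{F}=\core(p_1,q_1)$, which is the assertion. I expect no genuine obstacle here; the one point worth stating carefully is that ``$\C$ is maximal'' is not self-evident and is supplied by monotonicity through union-closure of $\mathcal{F}$, while the threshold comparisons are immediate and finiteness of $\network$ makes the ``union of all admissible sets'' argument clean. (Alternatively one can phrase the argument in one line: observe once and for all that $\core(p,q;f,g)=\bigcup\mathcal{F}_{p,q}$ for monotonic $f,g$, and then the inclusion $\mathcal{F}_{p_2,q_2}\subseteq\mathcal{F}_{p_1,q_1}$ from $p_2\ge p_1$, $q_2\ge q_1$ gives the result directly.)
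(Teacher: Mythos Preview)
Your argument is correct and follows the same outline as the paper: show that $\C=\core(p_2,q_2)$ satisfies the $(p_1,q_1)$ constraints, then invoke maximality of $\core(p_1,q_1)$. The paper's proof is in fact terser and less careful than yours: it mentions that $f,g$ are monotonic but the only step it actually writes out is the trivial chain $f(v,\C)\ge p_2\ge p_1$, and then it jumps directly from ``$\C$ satisfies the $(p_1,q_1)$ conditions but need not be maximal'' to ``$\C\subseteq\core(p_1,q_1)$''. You correctly observe that this last implication is not automatic from Definition~\ref{defgtcore} as stated (``maximal'' could a priori mean only inclusion-maximal), and you supply the missing justification by showing that monotonicity makes the admissible family $\mathcal{F}$ closed under unions, so that $\core(p_1,q_1)=\bigcup\mathcal{F}$ is the unique greatest element. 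This is exactly the right place to locate the use of monotonicity, and your proof is a cleaner version of what the paper intends.
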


\begin{proof}
By definition of a generalized two-mode core $\C = \core(p_2, q_2)$ it holds:
$$
\forall v \in \C_1: f(v, \C) \geq p_2  \quad\textrm{ and }\quad  \forall v \in \C_2: g(v, \C) \geq q_2
$$
and $\C$ is maximal such subset of nodes.
Because $f$ and $g$ are monotonic and we have $p_1 \leq p_2$ and $q_1 \leq q_2$ it also holds:
$$ 
\forall v \in \C_1: f(v, \C) \geq p_1  \quad\textrm{ and }\quad  \forall v \in \C_2: g(v, \C) \geq q_1.
$$
But $p_1 \leq p_2$ and $q_1 \leq q_2$ so $\C$ is not necessarily the maximal subset of nodes that defines $\core(p_1, q_1).$
Therefore:
$$
\core (p_2, q_2) = \C \subseteq \core (p_1, q_1).
$$
\end{proof}

For a given subset $\C \subseteq \V$ let $p(\C) = \min_{v \in \C_1} f(v, \C)$ be the minimum property value in the set $\C_1 = \C \cap \V_1$ and $q(\C) = \min_{v \in \C_2} g(v, \C)$ the minimum property value in the set $\C_2 = \C \cap \V_2$. It holds $\C \subseteq \core(p(\C), q(\C))$.

For a given two-mode network $\network = (\V, \lines, (f,g),w) $ let $P = \{ p(\C): \C \subseteq \V \}$ be the set of all possible values of $p(\C)$ and $Q = \{ q(\C): \C \subseteq \V \}$ the set of all possible values of $q(\C)$. $P$ and $Q$ are finite sets. Therefore we can enumerate their elements:
$$
\begin{array}{c}
P = \{ p_1, p_2, \ldots, p_r \},\quad p_i < p_{i+1},\\
Q = \{ q_1, q_2, \ldots, q_s \},\quad q_i < q_{i+1}.
\end{array}
$$

For fixed functions $f$ and $g$ we are interested only in $(p,q)$ pairs that are determining different nonempty generalized two-mode cores.

It is clear from the condition $p,q \geq 0$ that if we look at the Cartesian coordinate system with $(p, q)$ axes (Fig.~\ref{pq}), the region of all possible pairs of thresholds $p$ and $q$ is limited to the first quadrant. 
The missing boundary of this region is a broken line in a shape of stairs.
\begin{figure}[!h]
\centering
\includegraphics[width=6.6cm]{./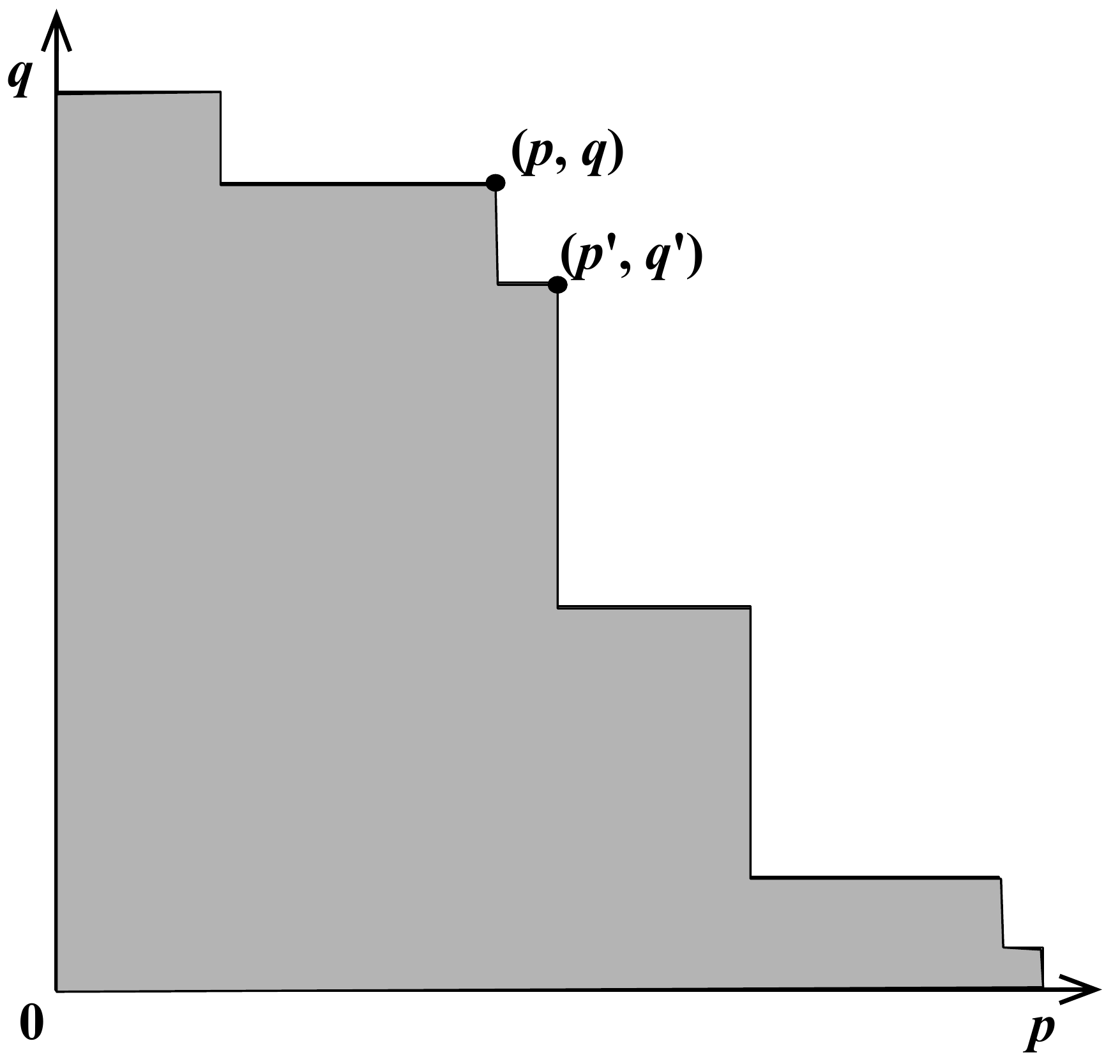}
\caption{The region of all possible threshold pairs $(p,q)$ that determine generalized two-mode cores.}
\label{pq}
\end{figure}

For $p, q \in \mathbb{R}_0^+$ let $\Pi(p) = \{ \C: \C \subseteq \V \land p(\C) \geq p \}$ be the set of sets for which the property value of the first subset is at least equal to $p$, and $\Gamma(q) = \{ \C: \C \subseteq \V \land q(\C) \geq q \}$ be the set of sets for which the property value of the second subset is at least equal to $q$.  Let $G(p) = \{ q(\C): \C \in \Pi(p) \}$ be the set of all possible values $q(\C)$ for which the set $\C$ belongs to the set $\Pi(p)$ and $q_{\Pi(p)} = \max G(p)$ is the maximum such value $q(\C)$. Let $H(q) = \{ p(\C): \C \in \Gamma(q) \}$ be the set of all possible values $p(\C)$ for which the set $\C$ belongs to the set $\Gamma (q)$ and $p_{\Gamma(q)} = \max H(q)$ is the maximum such value $p(\C)$.

\begin{lemma}
The following properties hold for $p, p' \in P$ and $q, q' \in Q$:
\begin{enumerate}
\item In the set $\Gamma (q_{\Pi(p)})$ exists at least a set $\C$ for which it holds $p(\C) = p_0$ and $q(\C) = q_{\Pi(p_0)}$. Therefore:
$$\Gamma (q_{\Pi(p)}) \neq \emptyset.$$ Similarly: $\Pi (p_{\Gamma(q)}) \neq \emptyset$.
\item It holds $\C \subseteq \core(p(\C), q(\C))$ and in all sets in $\Gamma (q_{\Pi(p)})$ the property values for the second subset are at least equal to $q_{\Pi(p)}$. It also  holds:
$$\C \in \Gamma (q_{\Pi(p)}) \implies \C \subseteq \core(p, q_{\Pi(p)}).$$ Similarly: $\C \in \Pi (p_{\Gamma(q)}) \implies \C \subseteq \core(p_{\Gamma(q)}, q)$.
\item $q_{\Pi(p)}$ is the maximum such $q$ for which a nonempty core $\core(p,q)$ exists:
$$q > q_{\Pi(p)} \implies \core(p, q) = \emptyset.$$ Similarly: $p > p_{\Gamma(q)} \implies \core(p, q) = \emptyset.$
\item $q_{\Pi}$ is a decreasing function:
$$p < p' \implies q_{\Pi(p')} \leq q_{\Pi(p)}.$$ Similarly: $q < q' \implies p_{\Gamma(q')} \leq p_{\Gamma(q)}.$
\end{enumerate}
\end{lemma}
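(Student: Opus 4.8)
The plan is to establish the four items essentially independently, using only three ingredients already in hand: the inclusion $\C\subseteq\core(p(\C),q(\C))$ recorded just above, the preceding theorem $(p_1\le p_2\wedge q_1\le q_2)\Rightarrow\core(p_2,q_2)\subseteq\core(p_1,q_1)$, and the plain definitions of $\Pi(p)$, $G(p)$, $q_{\Pi(p)}$. In each item the ``Similarly'' half is the mirror of the first half: by Lemma~\ref{lem1}, passing from $\network$ to $\tilde{\network}$ interchanges the $\Gamma$, $p_{\Gamma(q)}$ objects of $\network$ with the $\Pi$, $q_{\Pi(p)}$ objects of $\tilde{\network}$, so it suffices to treat the $\Pi$, $q_{\Pi(p)}$ statements. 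I would prove item~4 first, because the precise form of item~1 (the equalities $p(\C)=p_0$ and $q(\C)=q_{\Pi(p_0)}$, not just $\Gamma(q_{\Pi(p)})\neq\emptyset$) refers back to it.

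For item~4, if $p<p'$ then $p(\C)\ge p'$ implies $p(\C)\ge p$, so $\Pi(p')\subseteq\Pi(p)$, hence $G(p')\subseteq G(p)$; both are finite nonempty subsets of $\mathbb{R}$ (nonempty since $p,p'\in P$ are attained by some subsets), so $q_{\Pi(p')}=\max G(p')\le\max G(p)=q_{\Pi(p)}$. For item~1, $p\in P$ is attained, so $\Pi(p)\neq\emptyset$, $G(p)$ is a nonempty finite set, and its maximum $q_{\Pi(p)}$ is realised by some $\C^\ast\in\Pi(p)$ with $q(\C^\ast)=q_{\Pi(p)}$; since $q(\C^\ast)\ge q_{\Pi(p)}$ this $\C^\ast$ lies in $\Gamma(q_{\Pi(p)})$, which is therefore nonempty. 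Putting $p_0:=p(\C^\ast)\ge p$, from $\C^\ast\in\Pi(p_0)$ we get $q_{\Pi(p_0)}\ge q(\C^\ast)=q_{\Pi(p)}$, and from item~4 we get $q_{\Pi(p_0)}\le q_{\Pi(p)}$; hence $q(\C^\ast)=q_{\Pi(p_0)}$ and $p(\C^\ast)=p_0$, which is the stated form.

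For item~2, take $\C$ in $\Pi(p)\cap\Gamma(q_{\Pi(p)})$ (the witness $\C^\ast$ above is one such set). Then $p\le p(\C)$, and $q(\C)=q_{\Pi(p)}$ since $\C\in\Pi(p)$ gives $q(\C)\le q_{\Pi(p)}$ while $\C\in\Gamma(q_{\Pi(p)})$ gives $q(\C)\ge q_{\Pi(p)}$. Applying the preceding theorem with $(p_1,q_1)=(p,q_{\Pi(p)})$ and $(p_2,q_2)=(p(\C),q(\C))$, whose hypotheses $p\le p(\C)$ and $q_{\Pi(p)}\le q(\C)$ hold, gives $\core(p(\C),q(\C))\subseteq\core(p,q_{\Pi(p)})$; chaining with $\C\subseteq\core(p(\C),q(\C))$ yields $\C\subseteq\core(p,q_{\Pi(p)})$. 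For item~3 I argue by contraposition: if $\C:=\core(p,q)\neq\emptyset$, the defining inequalities give $f(v,\C)\ge p$ for $v\in\C_1$ and $g(v,\C)\ge q$ for $v\in\C_2$, hence $p(\C)\ge p$, i.e.\ $\C\in\Pi(p)$, and $q(\C)\ge q$; therefore $q\le q(\C)\in G(p)$, so $q\le q_{\Pi(p)}$, which is the contrapositive of the claim.

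The step I expect to take the most care is not any of the inclusions above but the handling of degenerate subsets: when $\C_1=\emptyset$ (resp.\ $\C_2=\emptyset$) the quantity $p(\C)$ (resp.\ $q(\C)$) is a minimum over the empty set and needs a fixed convention, and the literal truth of items~2 and~3 is sensitive to it and to what is allowed to count as a (nonempty) core --- in particular, item~2's implication really needs $\C$ to lie in $\Pi(p)$, not merely in $\Gamma(q_{\Pi(p)})$, and item~3 as stated presupposes that a core cannot consist of a single mode. I would therefore fix these conventions explicitly first; with them in place the arguments above carry through.
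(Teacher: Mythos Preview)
Your argument is correct and matches the paper's approach: the paper only writes out item~4 explicitly, with exactly your chain $\Pi(p')\subseteq\Pi(p)\Rightarrow G(p')\subseteq G(p)\Rightarrow q_{\Pi(p')}\le q_{\Pi(p)}$, dismissing the remaining items as ``simple''; your treatment of items~1--3 is the natural unfolding the paper leaves to the reader.

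Your caveat on item~2 is not mere caution but necessary: as literally stated, the implication $\C\in\Gamma(q_{\Pi(p)})\Rightarrow\C\subseteq\core(p,q_{\Pi(p)})$ is false without the additional hypothesis $\C\in\Pi(p)$. For instance, take $\V_1=\{a,b,x\}$, $\V_2=\{c,d\}$, edges $ac,ad,bc,bd,xc$, $f=g=\deg_\C$, and $p=2$; then $q_{\Pi(2)}=2$ and $\core(2,2)=\{a,b,c,d\}$, yet $\C=\{a,b,x,c\}$ has $q(\C)=3\ge 2$ while $x\notin\core(2,2)$. So your restriction to $\C\in\Pi(p)\cap\Gamma(q_{\Pi(p)})$ is the right repair, and your attention to the empty-mode conventions before item~3 is likewise warranted.
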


\begin{proof}
The proofs for all of the listed properties are simple. Let us show just the first part of the property 4.
$$
\begin{array}{rcl}
\Pi(p')  & = & \{ \C:   \C \subseteq \V \land p(\C) \geq p' \}\\
& \subseteq & \{ \C:   \C \subseteq \V \land p(\C) \geq p \}  =  \Pi(p) \\
\end{array}
$$
This implies 
$
G(p') \subseteq G(p)
$
 and therefore
$q_{\Pi(p')} \leq q_{\Pi(p)}$.
\end{proof}

From these properties it follows that for each $p \in P$ there exist a maximum $q \in Q$: $q = q_{\Pi(p)}$; and for each $q \in Q$ there exists a maximum $p \in P$: $p = p_{\Gamma(q)}$. This is a formal proof of the staircase shape of the boundary. On this basis we can develop an algorithm for determining the boundary of the set $(P, Q)$ in the $(p,q)$ coordinate system -- see  Algorithm \ref{bound}.
\begin{algorithm}
\caption{The algorithm to determine the boundary of the set $(P, Q)$ in the $(p,q)$ coordinate system.}
\label{bound}
\begin{lstlisting}[mathescape]
Input:  $P = \{ p_1, p_2, \ldots, p_r \}, p_i < p_{i+1},$ $Q = \{ q_1, q_2, \ldots, q_s \}, q_i < q_{i+1}$.
Output: boundary set $boundary \subseteq P \times Q$.

Algorithm:
$q_{\max} = 0$
$boundary = \emptyset$

for $p \in reverse(P)$ do
     $q = q_{\Pi}(p)$
     if $q > q_{max}$ then
          $q_{max} = q$
          $boundary = boundary \cup \{ (p,q) \}$
\end{lstlisting}
\end{algorithm}

In general the Algorithm \ref{bound} is only of theoretical value because the sizes of sets $P$ and $Q$ can be very large. It can be used for some special property functions -- for example $f_1, f_2, f_3$ and $f_4$, where the sets $P$ and $Q$ are relatively small.

\section{The algorithm}
\label{algo}
We propose an algorithm for  determining a generalized two-mode core in two-mode networks for given thresholds $p$ and $q$, and property functions $f$ and $g$.

The basic idea of the algorithm for generalized two-mode core is to repeat 
removing the nodes that do not belong to it. Since the network is two-mode the
removing condition depends on to which set a node belongs.

\begin{algorithm}
\caption{Basic algorithm for determining a generalized two-mode core.}
\label{basic}
$\mathcal{C} = \mathcal{V}$\\
\textbf{while} $\exists v \in \mathcal{C} : 
(v \in \mathcal{C}_1 \land f(v,\mathcal{C}) < p) \lor
(v \in \mathcal{C}_2 \land g(v,\mathcal{C}) < q)$ \textbf{do}\\
\hspace*{10mm} $\mathcal{C} = \mathcal{C} \setminus \{ v \}$

\end{algorithm}

Adapting the proof of Theorem 1 from \cite{zaversnik} to two-mode networks we can prove the following theorem.

\begin{theorem}
The Algorithm~\ref{basic} determines the generalized two-mode core at thresholds $(p, q)$ for every monotonic node property functions $f(v, \C)$ and $g(v, \C)$. The result of the algorithm does not depend on the order of deletion of nodes that do not belong to the core -- satisfy the while loop condition.
\end{theorem}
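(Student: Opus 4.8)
The plan is to reprove, in the two-mode setting, the classical ``peeling'' argument for $k$-cores that underlies Theorem~1 of~\cite{zaversnik}: first establish termination, then identify the family of ``feasible'' sets and show it has a unique maximum equal to $\core(p,q)$, then prove the invariant that the algorithm never deletes a node of that maximum, and finally read off order independence. Throughout, call a subset $\C\subseteq\V$ \emph{feasible} if $f(v,\C)\geq p$ for all $v\in\C_1=\C\cap\V_1$ and $g(v,\C)\geq q$ for all $v\in\C_2=\C\cap\V_2$.

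\emph{Termination and the maximal feasible set.} Since $\V$ is finite and each pass of the while loop removes exactly one node, the algorithm halts after at most $|\V|$ iterations; let $\C^{\ast}$ denote the returned set. The empty set is vacuously feasible, so the family of feasible sets is nonempty. The key point is that feasibility is closed under union: if $\C,\C'$ are feasible and $v\in(\C\cup\C')_1$, then without loss of generality $v\in\C_1$, so $f(v,\C)\geq p$, and since $\C\subseteq\C\cup\C'$ monotonicity of $f$ gives $f(v,\C\cup\C')\geq f(v,\C)\geq p$; the argument for $\C_2$ with $g$ is identical. Because $\V$ is finite, the union $\C_{\max}$ of all feasible sets is again feasible and is the unique maximal feasible subset, hence by Definition~\ref{defgtcore} it equals $\core(p,q)$.

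\emph{The algorithm returns $\core(p,q)$.} When the while loop terminates its condition fails, i.e. there is no $v\in\C^{\ast}$ with $v\in\C^{\ast}_1\wedge f(v,\C^{\ast})<p$ or $v\in\C^{\ast}_2\wedge g(v,\C^{\ast})<q$; thus $\C^{\ast}$ is feasible and so $\C^{\ast}\subseteq\C_{\max}$. For the reverse inclusion, suppose some execution ever deletes a node lying in $\C_{\max}$, and consider the \emph{first} such deletion, say of a node $v$, with $\C$ the current set just before that step. By minimality, every previously deleted node lies outside $\C_{\max}$, so $\C_{\max}\subseteq\C$; moreover $v$ satisfies the loop condition, so $\C$ is not feasible whereas $\C_{\max}$ is, whence $\C_{\max}\subsetneq\C$. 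Assume $v\in\C_1$ and $f(v,\C)<p$ (the case $v\in\C_2$, $g(v,\C)<q$ is symmetric). Since $v\in(\C_{\max})_1$ and $\C_{\max}$ is feasible, $f(v,\C_{\max})\geq p$; but $\C_{\max}\subsetneq\C$ and $f$ is monotonic, so $f(v,\C_{\max})\leq f(v,\C)<p$, a contradiction. Hence no node of $\C_{\max}$ is ever removed, giving $\C_{\max}\subseteq\C^{\ast}$, and therefore $\C^{\ast}=\C_{\max}=\core(p,q)$.

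\emph{Order independence.} The argument above never used which violating node is selected at a given step, only that the selected node violates the condition; consequently, for every admissible deletion sequence the returned set equals $\C_{\max}$, which depends only on $\network$, $p$, $q$, $f$, $g$. Thus the output is independent of the order of deletion. The only step that genuinely requires the hypothesis is the closure of feasibility under union (equivalently, the monotonicity bound $f(v,\C_{\max})\leq f(v,\C)$); I expect this to be the main obstacle to watch, since without monotonicity the maximal feasible set need not be unique and the ``first deleted core node'' argument breaks down. Everything else is routine bookkeeping.
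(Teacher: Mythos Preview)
Your proof is correct and is precisely the adaptation the paper gestures at: the paper does not give an explicit proof, merely stating that one adapts Theorem~1 of~\cite{zaversnik} to the two-mode setting, and your union-closure/first-deletion argument is exactly that adaptation. One minor remark: the step deriving $\C_{\max}\subsetneq\C$ is not strictly needed, since $\C_{\max}\subseteq\C$ already yields $f(v,\C_{\max})\leq f(v,\C)$ (equality being trivial), but it does no harm.
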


\begin{algorithm}
\caption{The algorithm to determine the generalized two-mode core at thresholds $(p,q)$ for monotonic and local node property functions $f$ and $g$.}
\label{alg}
\begin{lstlisting}[mathescape]

Input: two-mode network $\network = ((\V_1, \V_2), \lines, (f,g), w)$, $\V_1 \cap \V_2 = \emptyset$,
        thresholds $p$, $q$.
Output: generalized two-mode core $\C = \C_1 \cup \C_2$.

Procedure:
Remove($h$, $t$, $\C_{current}$, $\C_{other}$, heap$_{current}$, heap$_{other}$):
     while not Empty(heap$_{current}$):
          if Root(heap$_{current}$).$value$ $\geq t$: return
          $u$ = RemoveRoot(heap$_{current}$).$key$
          $\C_{current}$ = $\C_{current} \backslash \{u\}$
          for $v$ in $N(u, \C_{other})$: 
                update(heap$_{other},v, h(v, \C))$

Algorithm:
$\C_1$ = $\V_1$, $\C_2$ = $\V_2$

for $v$ in $\V_1$:   value[$v$] = $f(v, \V)$
for $v$ in $\V_2$:   value[$v$] = $g(v, \V)$

build(heap$_1$,value, $\C_1$), build(heap$_2$,value, $\C_2$)

repeat:
     Remove($f$, $p$, $\C_1$, $\C_2$, heap$_1$, heap$_2$)
     Remove($g$, $q$, $\C_2$, $\C_1$, heap$_2$, heap$_1$)
until no node was removed
\end{lstlisting}
\end{algorithm}

Because the order of deletions has no impact on the final result we can, in our elaboration of the algorithm, start deleting from the first set, then switch to the second set, and back to the first set, etc., until no deletion is possible.

We use a binary heap implementation of priority queues~\cite{algoritmi} to organize the nodes in such a way that we can efficiently get the node with the smallest value of the property function as the root element in a heap. The function \verb$RemoveRoot(heap)$ returns the root element and removes it from the heap.
The value of the property function is calculated for each node according to the set of links $\lines$ and the weight function $w$ as described in Section~\ref{prop}.

We need two heaps -- one for each subset of nodes.
An element $E$ in a heap is a pair $E = (key, value)$ of an identificator of a node as $E.key$ and a property function value as $E.value$.
The elements are ordered by the value of the property function. 
We know that all neighbors of every node in the first subset are in the second subset,
and vice versa.

To bound the work needed for updating we shall assume in the following that both functions $f$ and $g$ are local. In this case we need to update only the heap of the neighboring subset of nodes when deleting a node. 

These decisions result in  Algorithm~\ref{alg} for determining generalized 
$(p,q)$-core for monotonic and local property functions $f$ and $g$.



\subsection{Complexity}
\label{complexity}
Algorithm \ref{alg} could be based also on a simpler data structure such as queues. We did not explore these options because they can not be efficiently extended to Algorithm \ref{alg1}.

Assume that the time complexity of the calculation of a local property function is $O(\deg (v))$ for each node $v$ in the network.
The building of the heap takes $O(n \cdot \log n)$. We use binary heaps instead of faster heaps with amortized constant update time complexity because they are easier to implement.
Because $\sum_{v \in \V} \deg (v) = 2m$, the time complexity of the initialization is 
$$
O(\displaystyle \sum_{v \in \V} \deg (v) + n \log n) = O(\max (m, n \log n)) .
$$

At each step of the while loop some node is removed and its neighbors get their property function value changed. They also change their position in the heap according to the change of their value.

The update of the property function value may require less time than its calculation. For example the value of $f_1 (v, \C) = \deg_{\C} (v)$ is corrected only by reducing its value by one for every removed neighbor of node $v$. The update of the value of property functions $f_1, f_2, f_3, f_4, f_5, f_7, f_8, f_{11}$, and $f_{13}$ takes $O(1)$ time; but for property functions $f_6, f_9, f_{10}, f_{12}$, and $f_{14}$ it takes $O(\deg (v))$ for every node $v.$

The heaps are implemented in such a way that the change of the position of an element in it takes $O(\log  n )$ time.

Let $s = (v_1, v_2, \ldots, v_d)$ be the sequence of the removed nodes during the execution of the algorithm. The removal of a node $v_i$ takes $O(\log  n )$. The update of the property function value of each of its neighbors and the change of its position in the heap takes $O(\log  n )$ or $O(\max (\log  n, \deg (v_i)))$ depending on the property function used, and for the sequence $s$ the update costs
$$
\displaystyle\sum_{v_i \in s} \deg (v_i) \cdot O(\log n) \leq 2m \cdot  O(\log n) =  O(m \cdot \log n)
$$
for property functions $f_1, f_2, f_3, f_4, f_5, f_7, f_8, f_{11}$, and $f_{13}$ or
$$
\begin{array}{cl}
& \displaystyle\sum_{v_i \in s} \deg (v_i) \cdot O(\max (\log n, \deg (v_i))) \\ 
\leq & O(\max (\log n, \Delta)) \cdot \displaystyle\sum_{v_i \in s} \deg (v_i) \\
\leq & 2m \cdot O(\max (\log n, \Delta)) =  O(m \cdot \max (\log n, \Delta)),
\end{array}
$$
where $\Delta = \max_{v \in \V} (\deg (v))$,
for property functions $f_6, f_9, f_{10}, f_{12}$, and $f_{14}$.

The time complexity of the initialization of the algorithm is lower than the time complexity of the main loop in the algorithm. So the time complexity of the whole algorithm for determining the generalized two-mode core is
$$
O(m \cdot \log n)
$$
for property functions $f_1, f_2, f_3, f_4, f_5, f_7, f_8, f_{11}$, and $f_{13}$ and 
$$
O(m \cdot \max (\log n, \Delta))
$$
for property functions $f_6, f_9, f_{10}, f_{12}$, and $f_{14}$.

\subsection{An algorithm for one threshold value fixed}
\label{generalization}
We adapted the algorithm so that one subset of nodes has a fixed threshold value.
The result of this algorithm is a vector in which every node has as its value the maximum value of the nonfixed threshold for which it is still in the corresponding generalized two-mode core.
This helps selecting the thresholds for which we get the most important generalized two-mode cores.

Algorithm \ref{alg1} presents such an adapted version of the Algorithm \ref{alg}. In this algorithm we fix the first threshold. In the case where the second threshold is fixed we apply the Lemma \ref{lem1}.

The algorithm is again based on the idea of deleting the nodes that do not satisfy the threshold. 
In the elaboration of this algorithm we also use heaps.
The value of the property function is calculated for each node according to the set of links $\lines$ and the weight function $w$ as described in Section~\ref{prop}.
The nodes are ordered in heaps for both subsets of nodes by the value of the property functions.

A step of the while loop starts by deleting all nodes in the heap for the first subset of nodes that do not satisy the fixed threshold. 
Removed nodes get the value of the second threshold in the previous step of the loop, because this is the largest value of the second threshold for which the removed node is still in the generalized two-mode core.
Property function values of some neighboring nodes might be changed, so we set the second threshold $q$ to the current smallest value in the heap for the second subset of nodes after the first part of the step.
Then we remove all nodes from the second heap that have the property function value equal to the current $q$. Removed nodes get the value $q$ in the resulting vector.
At the end of the step we set the  previous value of $q$ to its current value and its current value to the value of the root of the second heap.

The complexity of Algorithm \ref{alg1} is the same as the complexity of Algorithm \ref{alg}.

\begin{algorithm}
\caption{The algorithm to determine the vector of generalized two-mode core levels at fixed threshold $p$ for monotonic and local node property functions $f$ and $g$.}
\label{alg1}
\begin{lstlisting}[mathescape]
Input: two-mode network $\network = ((\V_1, \V_2), \lines, (f,g), w)$, $\V_1 \cap \V_2 = \emptyset$,
     threshold $p$.
Output: vector $T$, $T[v] = $ max value of the treshold $q$ such 
     that $v \in \core(p,q)$.

Procedures:
RemoveFixed($f, p, q, \C_{1}, \C_{2},$ heap$_{1},$ heap$_{2}$):
     while not Empty(heap$_1$):
          if Root(heap$_1$).$value$ $\geq p$: return
          $u$ = RemoveRoot(heap$_1$).$key$
          $T[u] = q$
          $\C_{1} = \C_{1} \backslash \{ u \}$ 
          for $v$ in $N(u, \C_{2}):$
               update(heap$_{2},v, f(v, \C_{1}))$

RemoveChanging($g, q, \C_{2}, \C_{1}$, heap$_{2}$, heap$_{1}$):
     while not Empty(heap$_2$):
          if Root(heap$_2$).$value$ $> q$: return
          $u$ = RemoveRoot(heap$_2$).$key$
          $T[u] = q$
          $\C_{2} = \C_{2} \backslash \{ u \}$ 
          for $v$ in $N(u, \C_{1}):$
               update(heap$_{1},v, g(v, \C_{2}))$

Algorithm:
$\C_1 = \V_1, \C_2 = \V_2$
for $v$ in $\V_1:$   value[$v$] = $f(v, \V_2), \, T[v] = -1$
for $v$ in $\V_2:$   value[$v$] = $g(v, \V_1), \, T[v] = -1$
$q = -1$
build(heap$_1$,value, $\V_1$), build(heap$_2$,value, $\V_2$)
repeat:
     RemoveFixed($f, p, q, \C_1, \C_2$, heap$_1$, heap$_2$)
     if not Empty(heap$_2$):
          q = Root(heap$_2$).value
          RemoveChanging($g, q, \C_2, \C_1$, heap$_2$, heap$_1$)
until Empty(heap$_1$) $\land$ Empty(heap$_2$)
\end{lstlisting}
\end{algorithm}

\section{Applications}
\label{rez}
The possibility of using different node properties for the identification of important two-mode subnetworks allows the analysts to gain information about different types of groups with a single method. We are continuing to search for node properties to include them into our list and the supporting program and apply them on real-life data.

The method of generalized two-mode cores can be applied to  different real-life data. The input network for the method does not need to be a two-mode network. The method can also be applied to an  one-mode network considered as a two-mode network. For example, in an authors' citation network the 'row'-authors can be considered as users, and the 'column'-authors as (knowledge) providers. The use of generalized two-mode cores on a one-mode network allows the analyst to find a subnetwork that is characterized by two different property functions.

\subsection{Social networks}
\label{sn}
Let us take a look at an example. Web of Science is a bibliographic database. We used the data
obtained in 2008 from this database for a query "social network*" and expanded with descriptions of most frequent references and bibliographies of around 100 social networkers. We constructed some two-mode networks. Two among them are also the networks works $\times$ journals and works $\times$ authors ($193376$ works, $14651$ journals, and $75930$ authors). We multiply the transpose of the first network with the second network to get the network journals $\times$ authors.
A journal and an author are linked if the author published at least one work in that journal. The weight of a link is equal to the number of such works.

The simplest generalized two-mode cores are the ones with both property functions the same. If we select
$$f_A(v, \C) = g_A(v, \C) = \deg_{\C}(v)$$
 we get the ordinary $(p,q)$-cores. In a $(p,q)$-core are the journals that published works of at least $p$ different authors in this core and the authors that published their works in at least $q$ different journals in this core.
 
 We determined the generalized two-mode core for functions $f_A$ and $g_A$
 and selected threshold values $p=85$ and $q=3$ for which we get the smallest two-mode core. This is one of the generalized two-mode cores on the border of $(p,q)$ region. It determines a subnetwork of journals in which at least $85$ authors (in this subnetwork) published their works, and of authors that published their works in at least $3$ different journals (in this subnetwork). There are $4$ such journals and $128$ such authors.  Journals in this two-mode core are American Sociological Review (with degree $122$), American Journal of Sociology ($112$), Social Forces ($90$), and Annual Review of Sociology ($85$). In Table \ref{ja_deg} those authors are listed that are linked to all $4$ journals in this two-mode core.
 
 \begin{table}[!h]\label{ja_deg}
 \caption{Authors in the $\core (85, 3; \deg_{\C}(v), \deg_{\C}(v))$ that are linked to all journals in this two-mode core.}
 \begin{tabular}{rlrlrl}
 1 & Breiger, R. & 10 & Kandel, D. & 19 & Olzak, S.   \\
 2 & Burt, R. &  11 & Keister, L. & 20 & Portes, A.   \\
 3 & DiMaggio, P. & 12 & Knoke, D. & 21 & Reskin, B.   \\
4 & Fischer, C. &  13 & Lieberson, S. &  22 & Ridgeway, C.  \\
5 & Friedkin, N. &  14 & Lin, N. &   23 & Sampson, R. \\
6 & Galaskiewicz, J. &  15 & Marsden, P. &  24 & Skvoretz, J. \\
7 & Glass, J. & 16 & McPherson, J.  &  25 & Thoits, P.   \\
8 &  Kalleberg, A. & 17 & Mizruchi, M. & \\
9 & Kalmijn, M.  &  18 & Nee, V.   &  
 \end{tabular}
 \end{table}

If we want to consider the number of works (the sum of weights of links) we can use 
$$f_B(v, \C) = \sum_{u \in N(v, \C)} w(v,u)$$
 and $g_B(v, \C) = \deg_{\C}(v)$ stays the same. In this generalized two-mode core with thresholds $p$ and $q$ are the journals that published at least $p$ works of authors in the core and the authors that published their works in at least $q$ journals in the core.

We determined generalized two-mode cores for these two functions. We used the algorithm for one fixed threshold. We selected $p \in \{0,1,2,5,10\}$ and determined the generalized two-mode cores for all five values of the fixed parameter $p$ and the non-fixed parameter $q$. Fig. \ref{sumdeg} presents  the diagram of a relation between the value of the parameter $q$ and the size of the generalized two-mode core. One can notice that the size of the generalized two-mode core is not the same for any pair of values  $(p, q).$ Because coordinates axes are in the logarithmic scale the point for the $\core (0,0;f_B, g_B)$ is not shown. Its size is equal to the size of the set of all nodes. In Fig. \ref{sumdegcore} the $\core(10,12;f_B,g_B)$ is shown for the two selected property functions. This is the smallest generalized two-mode core for $p=10$. In the $\core(10,12;f_B,g_B)$ are included all journals in which authors in the core published at least $10$ works in total; and all authors that each published his/her works in at least $12$ different journals in the core. The thickness of links represents the number of works an author published in a journal -- a thicker link means more works. One can notice that the journals data were not cleaned because the identification problem appears in Fig. \ref{sumdegcore}  -- the following pairs of journal identificators represent the same journal:
\begin{itemize}
\item \verb$amer sociol rev, am sociol rev:$ American Sociological Review,
\item \verb$amer j sociol, am j sociol:$ American Journal of Sociology,
\item \verb$adm sci q, admin sci quart:$ Administrative Science Quarterly.
\end{itemize}
Identificator \verb$sociol method$  represents one of the two journals: Sociological Methodology and Sociological Methods \& Research.
These two journals are present in Fig. \ref{sumdegcore} also with identificators \verb$sociol methodol$ and \verb$sociological$ \verb$methods$ respectively.

\begin{figure}[!ht]
\centering
\includegraphics[width=\textwidth]{./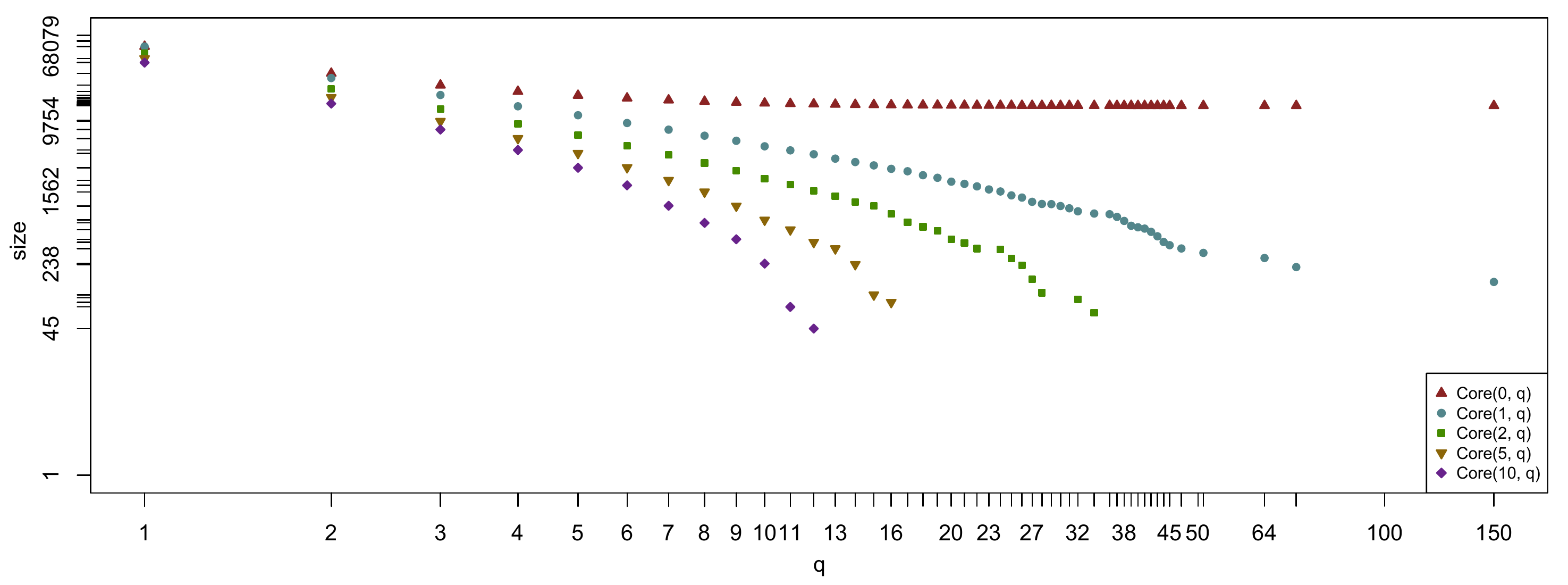}
\caption{A relation between the parameter $q$ and the size of the generalized two-mode core for the fixed values of parameter $p$ and the property functions $f_B$ and $g_B$.}
\label{sumdeg}
\end{figure}

\begin{figure}[!ht]
\centering
\includegraphics[width=\textwidth]{./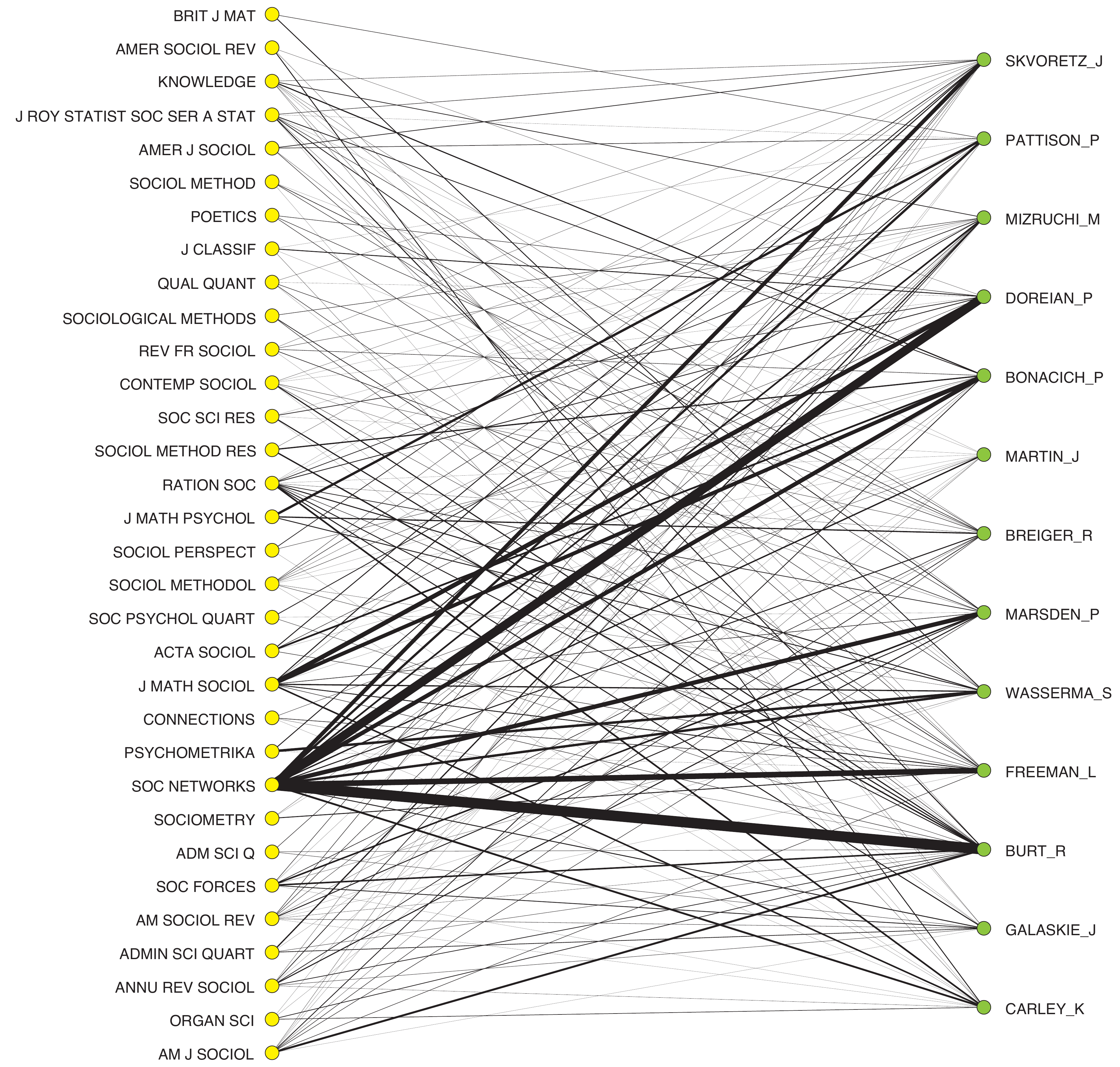}
\caption{A generalized two mode core for $p = 10$ and $q = 12$ and for the property functions $f_B$ and $g_B$.}
\label{sumdegcore}
\end{figure}

We could also select more complex property functions: 
$$f_C(v, \C) = \max_{u \in N(v, \C)} \deg(u) - \min_{u \in N(v, \C)} \deg(u)$$ and 
$$g_C(v, \C) = \frac{\sum_{u \in N(v, \C)} w(v,u)}{\sum_{u \in N(v)} w(v,u)}.$$ 
In the $\core (p,q; f_C,g_C)$ are the journals that published approximately (for a small value of $p$) the same number of works of each author in the core that is linked to those journals. And in this core are the authors that published at least $q \%$ of their works in journals that are in this core. This is one possible way to search for journals and authors that are tightly connected.

We determined the border of $(p,q)$ region for generalized two-mode cores for the property functions $f_C$ and $g_C$. The border is displayed in Fig. \ref{bordersn5}. At each boundary corner is written a pair of sizes of both sets of nodes in a generalized two-mode core. For example the $\core(70, 0.25; f_C,g_C)$ has $26$ nodes in the first set and $118$ nodes in the second set.

\begin{figure}[!ht]
\centering
\includegraphics[width=\textwidth]{./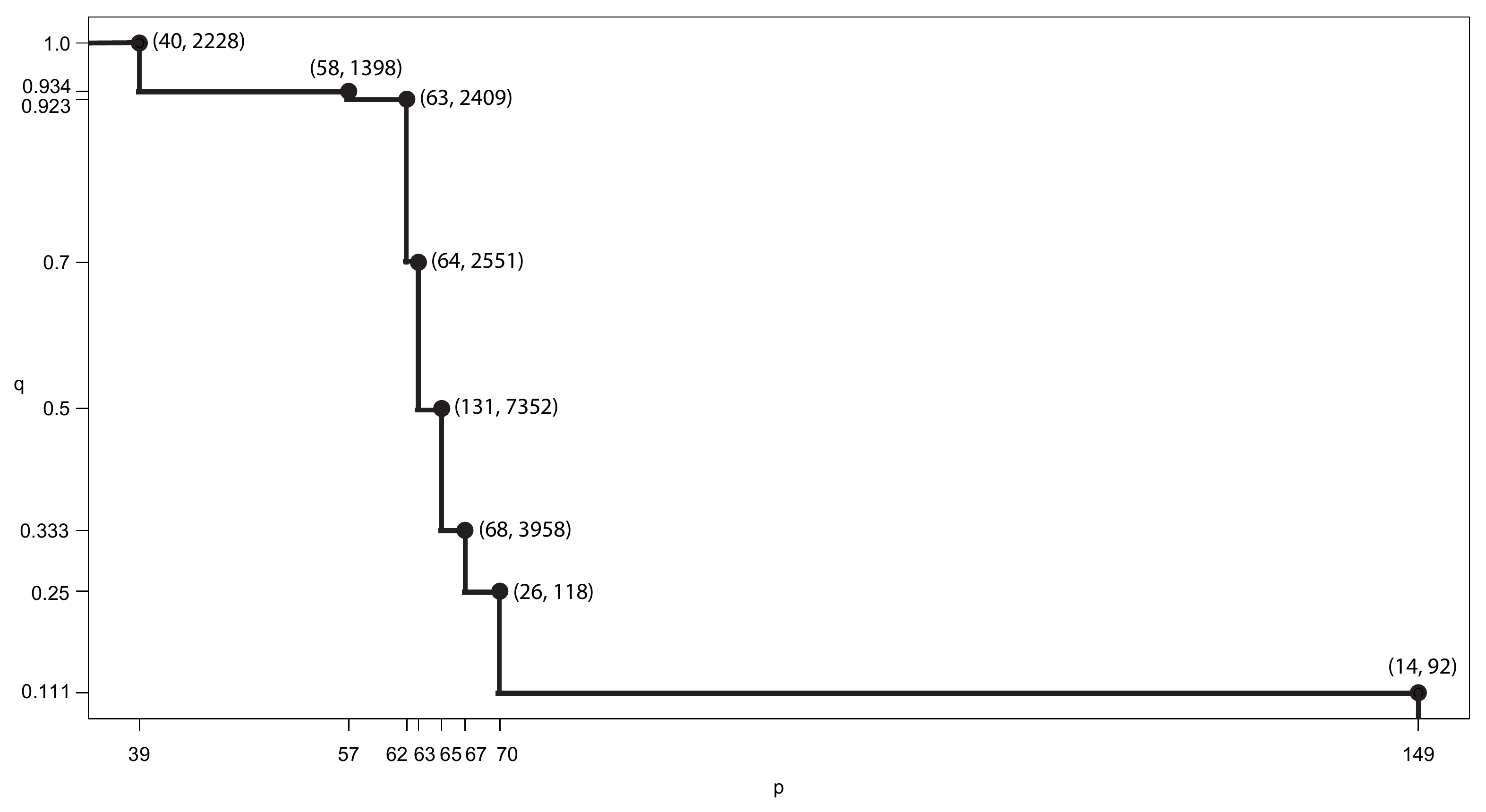}
\caption{The border of the $(p,q)$ region for $f_C$ and $g_C$.}
\label{bordersn5}
\end{figure}

\section{Conclusion}
\label{con}
In the paper we propose a new direct method for the analysis of two-mode networks. We provide an algorithm for determining generalized two-mode cores and present some examples of its application to real-life data. For the efficiency on large sparse networks we exploit the sparsity.  The presented approach can be straightforwardly extended to $r$-mode networks for $r>2$.

In our future work we intend to improve the efficiency of the algorithm and extend it for a use of the weights measured in nominal scales. We plan to make an experimental complexity analysis on random two-mode networks. For this task we also need to implement the generation of random two-mode networks of different types.

We already further elaborated Algorithm~\ref{alg} to produce the nested generalized two-mode cores for one fixed threshold that is shown in Algorithm~\ref{alg1}.
 We would like to improve this algorithm further -- to produce all $(p,q)$ pairs that determine different generalized two-mode cores and to identify only the boundary $(p, q)$ pairs as they are shown in Fig.~\ref{pq}. We  also intend to explore  the structure of the space of all generalized two-mode cores.
 
 An implementation of the proposed algorithms in Python is available at\\ http://zvonka.fmf.uni-lj.si/netbook/doku.php?id=pub:core.
 \\
 
 \textbf{Acknowledgements.}
The work was supported in part by the ARRS, Slovenia, grant J5-5537, as well as by grant within the EUROCORES Programme EUROGIGA (project GReGAS) of the European Science Foundation.

The first author was financed in part by the European Union, European Social Fund.




\bibliographystyle{elsarticle-num}

\bibliography{gtcores.bib}







\end{document}